\newtheorem{definition}{Definition}[section]
\newtheorem{theorem}{Theorem}[section]
\newtheorem{lemma}{Lemma}[section]
\newcommand{\cdag}{\ensuremath{\mathtt{cdag}}}
\newcommand{\ctree}{\ensuremath{\mathtt{ctree}}}
\newcommand{\cforest}{\ensuremath{\mathtt{cforest}}}
\newcommand{\subg}[2]{\ensuremath{{#1}[#2]}}
\newcommand{\ance}[1]{\ensuremath{\mathcal{A}({#1})}}
\newcommand{\desc}[1]{\ensuremath{\mathcal{D}({#1})}}
\newcommand{\ancePlus}[1]{\ensuremath{\mathcal{A}^{+}({#1})}}
\newcommand{\descPlus}[1]{\ensuremath{\mathcal{D}^{+}({#1})}}
\newcommand{\anceR}[1]{\ensuremath{\mathcal{A}^{R}({#1})}}
\newcommand{\descR}[1]{\ensuremath{\mathcal{D}^{R}({#1})}}
\newcommand{\reverse}[1]{\ensuremath{#1^{R}}}
\newcommand{\Isa}{\ensuremath{\mathsf{is}}-\ensuremath{\mathsf{a}}}
\newcommand{\Partof}{\ensuremath{\mathsf{part}}-\ensuremath{\mathsf{of}}}
\newcommand{\Regulates}{\ensuremath{\mathsf{regulates}}}
\definecolor{myred}{rgb}{0.7686,0.1882,0.1686}
\definecolor{mygreen}{rgb}{0.1255,0.5020,0.3137}
\definecolor{myblue}{rgb}{0,0.3255,0.6235}
\definecolor{myyellow}{rgb}{0.9804,0.5843,0}
\newcommand\drawcircle[1]{%
  \begin{tikzpicture}                                                           
    \draw [fill={#1}] (0,0.8) circle [radius=0.1];                                       
  \end{tikzpicture}%
}
\date{\vspace{-5ex}} 
\begin{document}

\title{Enumerating consistent subgraphs of directed acyclic graphs: an insight into biomedical ontologies}

\newcommand\CoAuthorMark{\footnotemark[\arabic{footnote}]}

\author{Yisu Peng\footnote{Contributed equally to this work.}}
\author{Yuxiang Jiang\protect\CoAuthorMark}
\author{Predrag Radivojac\footnote{Corresponding author, email: \texttt{predrag@indiana.edu}}}
\affil{Department of Computer Science,\\ Indiana University, Bloomington, Indiana, U.S.A.}

\maketitle

\begin{abstract}
Modern problems of concept annotation associate an object of interest (gene, individual, text document) with a set of interrelated textual descriptors (functions, diseases, topics), often organized in concept hierarchies or ontologies. Most ontologies can be seen as directed acyclic graphs, where nodes represent concepts and edges represent relational ties between these concepts. Given an ontology graph, each object can only be annotated by a consistent subgraph; that is, a subgraph such that if an object is annotated by a particular concept, it must also be annotated by all other concepts that generalize it. Ontologies therefore provide a compact representation of a large space of possible consistent subgraphs; however, until now we have not been aware of a practical algorithm that can enumerate such annotation spaces for a given ontology. In this work we propose an algorithm for enumerating consistent subgraphs of directed acyclic graphs. The algorithm recursively partitions the graph into strictly smaller graphs until the resulting graph becomes a rooted tree (forest), for which a linear-time solution is computed. It then combines the tallies from graphs created in the recursion to obtain the final count. We prove the correctness of this algorithm and then apply it to characterize four major biomedical ontologies. We believe this work provides valuable insights into concept annotation spaces and predictability of ontological annotation.
\end{abstract}

\section{Introduction}

Ontologies have become a common means of concept annotation in computational biology and related fields \cite{Robinson2011}. A protein's molecular function \cite{Ashburner2000}, an effect of a genetic variant \cite{Vihinen2014}, or a patient's diagnosis \cite{Robinson2010} are typical examples in which biomedical entities such as macromolecules, mutations, or individuals are associated with sets of mutually dependent descriptors. The dependencies between these descriptors are often hierarchical, leading to the use of directed acyclic graphs as concept space representations.

A directed acyclic graph is a pair $(V,E)$, where $V$ is a set of vertices (nodes) and $E$ is a set of directed edges (links) between vertices such that no cycles can be formed. Each vertex in the graph is associated with a unique concept (term, description) and each edge is associated with a particular type of relational tie. For example, when annotating proteins as biomedical entities using the Gene Ontology graph \cite{Ashburner2000}, the terms ``nucleic acid binding'' and ``DNA binding'' are linked by edges of the type \Isa\ asserting that DNA binding is a more specific form of nucleic acid binding. Other types of relational ties include \Partof, \Regulates, and so on.

A typical biomedical entity is associated with a set of terms determined through experiment such as a molecular assay or a diagnostic procedure. A protein, for example, may be assigned terms ``DNA binding'' and ``RNA binding'', neither of which is a generalization of the other. To avoid annotation inconsistencies, this protein must also be annotated by the terms such as ``nucleic acid binding'' and all others that generalize either of the experimentally determined terms. More broadly, this implies that a biomedical object can only be annotated by a set of terms that respect the hierarchy -- a \emph{consistent subgraph} of the ontology. Unfortunately, (manual) experimental annotation is resource-demanding and often incomplete \cite{Poux2017}, giving rise to an entire field of computational prediction \cite{Radivojac2013, Jiang2016}. 

The development of computational prediction methods presents its own challenges. Although it can be performed by building a separate binary classifier for each concept in the ontology, this approach is currently competitive only for specialized ranking tasks; e.g., disease-gene prioritization \cite{Moreau2012}, since it does not exploit relationships between the terms. On the other hand, a more complete characterization is via learning structured outputs \cite{Sokolov2010} in which a method takes an object (e.g., a protein) and is asked to provide the totality of concepts with which this object might be associated (i.e., a consistent subgraph). However, the structured-output formulation generally falls under the extreme classification umbrella because the size of the output space is often exceedingly large. This poses problems in measuring similarity between annotations, evaluating accuracy of classification models, and optimization when solving the ``argmax problem'' \cite{Clark2013, Friedberg2017, Joachims2009}.

We identify now what we believe is an open problem in computational biology and computer science; that is, efficiently determining the exact number of consistent subgraphs in a given ontology. This problem has a linear-time solution for rooted trees \cite{Ruskey1981}, but to our knowledge no such algorithm exists for directed acyclic graphs. This paper therefore proposes a practical solution to this enumeration problem, proves its correctness, analyzes run-time complexity, and introduces various computational speedups. Using this new approach, we analyze four often-used ontologies from the biomedical domain and explore the space of possible annotations. We believe that the algorithms, software, and analysis carried out in this work will lead to better insights into concept annotation spaces and facilitate ontology quality assurance.

\section{A Motivating Example}
A growing number of concept annotation problems are formulated as the manual or computational assignment of a set of mutually related textual descriptors to some objects of interest. One of such problems is the computational prediction of protein function \cite{Friedberg2017}, which can be broadly operationalized as follows: 

\vspace{2mm}

\begin{quote} 
\emph{Given}: (1) an amino acid sequence with auxiliary data such as structure, expression, interactions, etc.~of a protein $p$ with unknown or incomplete function; (2) training data that includes sequences, structures, or systems data corresponding to a (large) set of proteins, some of which have their true biological functions available; (3) a Gene Ontology (GO); i.e., a concept hierarchy used to represent biological functions of proteins in a structured and easy-to-compute-on form.

\vspace{1mm}

\emph{Objective}: provide a set of GO terms that are most likely to be the true (experimental) annotation of $p$. 
\end{quote} 

\vspace{2mm}

\noindent The objects of interest here are proteins and the set of textual descriptors of protein function is given by GO -- an ontology with a directed acyclic graph structure where each node represents a textual descriptor and each edge represents a particular type of a relational tie between two descriptors \cite{Ashburner2000}. 

\begin{figure}
    \centering
    \includegraphics{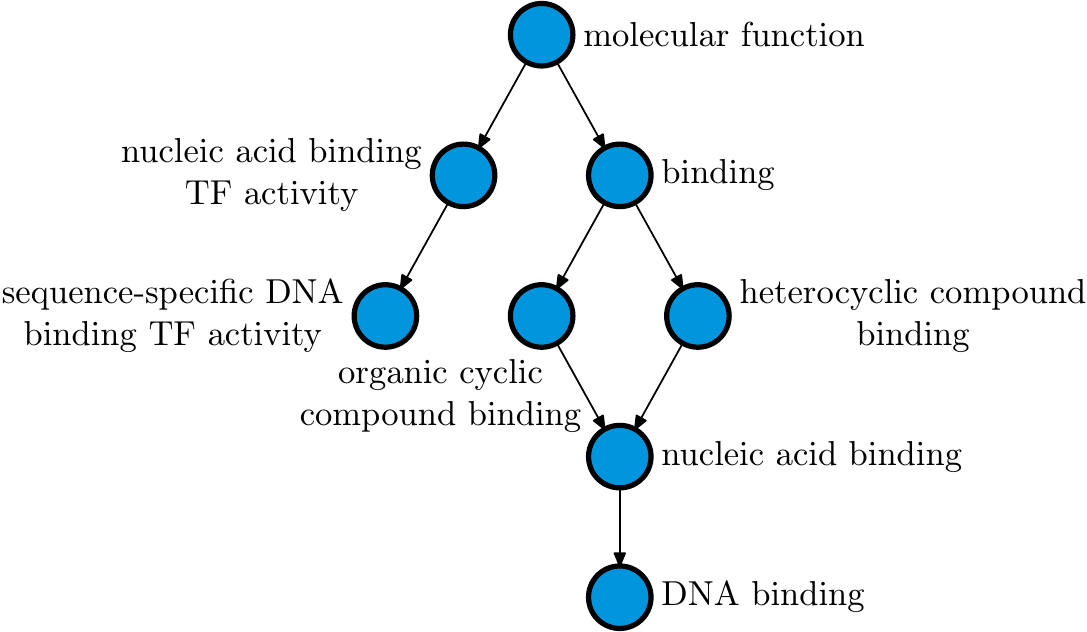}
    \caption{The functional annotation of the friend leukemia integration 1 transcription factor isoform 1 [\emph{FLI1}; \textit{Homo sapiens}] (RefSeq ID: NP\_002008.2) as a consistent subgraph of the molecular function ontology. The arrows in this graph indicate an \textsf{is-a} relationship and are drawn in the reverse direction.}
    \label{fig:annotation}
\end{figure}

An example of such an annotation is shown in Figure~\ref{fig:annotation}, where 8 terms from the molecular function domain have been assigned to this protein. Due to the hierarchical organization of GO, both the set of experimentally determined terms and the set of computationally predicted terms must respect this hierarchy. As shown in this example, the annotation of the term ``DNA binding'', implies the annotation of all the other GO terms that conceptually generalize it; e.g., ``nucleic acid binding'', ``binding'', etc. Typically, the ontology used to represent the annotation space of proteins contains thousands to tens of thousands of terms, whereas the true annotation of a protein consists of tens to at most hundreds of terms. Because the task of a prediction algorithm is to find the most likely annotation, it must devise an efficient procedure to search through the space of \emph{all} possible annotations.

Most biomedical ontologies have grown over the years to contain a large number of terms. Computationally selecting one such ``winning'' annotation; i.e., a set of terms, or even providing a short list of most likely annotations, is a significant challenge \cite{Joachims2009, Sokolov2010}. This prediction problem thus belongs to a so-called extreme classification scenario because the number of possible (discrete) annotations the algorithm must consider is astronomically large. In fact, we noticed that it is not even possible to give an exact number of possible annotations for a protein. Therefore, an answer to such a simple question (``What is the the number of possible GO annotations a protein can be assigned?'') requires the development of a practical counting algorithm. The resulting counts can, in turn, give insight into the nature and the difficulty of the computational function annotation of biological macromolecules.\footnote{Reasonable approximations can be provided by calculating the lower and upper bounds, as we have done later in Section \ref{sec:results}. Neither of those, however, provides a full intellectual satisfaction when an exact count can actually be computed.}

It is important to mention that the annotation of biological macromolecules is one of the most interesting examples of concept annotation, primarily because of its biomedical significance but also because of the sizes of the available ontologies. Similar situations, however, arise beyond computational biology, as in the fields of text mining \cite{Grosshans2014} and computer vision \cite{Movshovitz2015}. 

\section{Preliminaries}

\subsection{Basic Concepts and Notation}
Let $\mathcal{G} = (V, E)$ be a \textit{directed} graph, where $V$ is a set of vertices representing concepts and $E\subseteq V\times V$ is a collection of ordered pairs $(u, v)$ representing directional relationships, $u \to v$, between two concepts. A sequence of vertices $u_{1}, u_{2}, \dots, u_{k}$ is called a \textit{walk} if $(u_{i}, u_{i+1})\in E$ for $i = 1, 2, \dots, k-1$. A walk of distinct vertices except for the identical starting and ending vertices is called a \textit{cycle}. A directed graph that does not contain cycles is referred to as \textit{directed acyclic graph}~(DAG).

Given two vertices $u, v \in V$ in a DAG, $u$ is said to be an ancestor of $v$ and $v$ is said to be a descendant of $u$ if there exists a walk from $u$ to $v$. We denote a set of all ancestors of $v$ as $\ance{v}$ and a set of all descendants of $u$ as $\desc{u}$. We next define $\ancePlus{v} = \{v\} \cup \ance{v}$ as the set of extended ancestors of $v$ and $\descPlus{u} = \{u\} \cup \desc{u}$ as the set of extended descendants of $u$. Finally, if $(u, v) \in E$, the vertex $u$ is said to be a parent of $v$, whereas $v$ is said to be a child of $u$. We denote the set of all parents of $v$ as $\mathcal{P}(v)$ and the set of all children of $u$ as $\mathcal{C}(u)$. 

\subsection{Transitivity of Relational Ties}
When an object is annotated with ontological concepts, it is often considered that all ancestors of those annotated concepts should be automatically assigned to the object. For example, annotating the function of a protein with ``enzyme binding'' also implicitly annotates it with ``protein binding'', ``binding'' and, finally, the root term ``molecular function''. This type of reasoning requires all involved relationships between concepts to be transitive.

Biomedical ontologies, however, usually contain various types of relationships between concepts, some of which are not transitive. Therefore, we only consider \Isa\ and \Partof\ relationships, both of which maintain transitivity and permit reasoning about ancestral concepts. It is also worth noting that we define the direction of edges to be pointing from the general terms to specific so that the depth of a node aligns with the increasing resolution of the descriptors. We show in Section~\ref{sec:reversal} that the directionality of edges has no impact on the total count. Throughout this work, we consider an ontology $\mathcal{O} = (V, E)$ to be a DAG, where edges represent transitive relationships.

\subsection{Consistent Subgraphs}
Let $\mathcal{O}=(V, E)$ be an ontology and $S \subseteq V$ a set of vertices. A subgraph $(S, E^S)$ is said to be induced from the original graph $\mathcal{O}$ by $S$ if $E^S$ is the largest subset of pairs $(u, v)$ from $E$ such that both $u, v \in S$. We denote such vertex-induced subgraph as $\subg{\mathcal{O}}{S}$. We also use $\mathcal{O}[-S] = (V-S, E^{V-S})$ to denote the subgraph induced by vertices other than $S$.

\begin{definition}
A subgraph $\subg{\mathcal{O}}{S} = (S, E^S)$ with respect to the original graph $\mathcal{O} = (V, E)$ is called \emph{consistent} if $\forall v \in S$, $(u, v) \in E \implies u \in S$.
\end{definition}

\section{Basic Algorithms}

\subsection{Problem Specification}
Given an ontology $\mathcal{O} = (V, E)$, our goal is to develop a practical algorithm that enumerates all consistent subgraphs of $\mathcal{O}$. We allow the graph to have more than a single root (a vertex with no incoming edges) as well as to be disconnected. 

An example of the enumeration problem is shown in Figure~\ref{fig:example}. We generally observe that the number of consistent subgraphs is bounded from below by $2^{\ell}$, where $\ell$ is the total number of leaf vertices (those with no outgoing edges), and from above by $2^{|V|}$. The structure of the graph, however, determines the exact count and its proximity to either of the bounds. If the input graph is a chain of $|V|$ vertices ($\ell=1$), the total number of consistent subgraphs equals $|V|+1$. On the other hand, if the original graph is a set of $|V|$ disconnected vertices ($\ell=|V|$), there are $2^{|V|}=2^{\ell}$ consistent subgraphs. This analysis suggests that enumerating consistent subgraphs has a straightforward intractable solution of listing all $2^{|V|}$ vertex-induced subgraphs of the ontology and checking for the consistency of each such subgraph.

\begin{figure}
\centering
\includegraphics[width=\columnwidth]{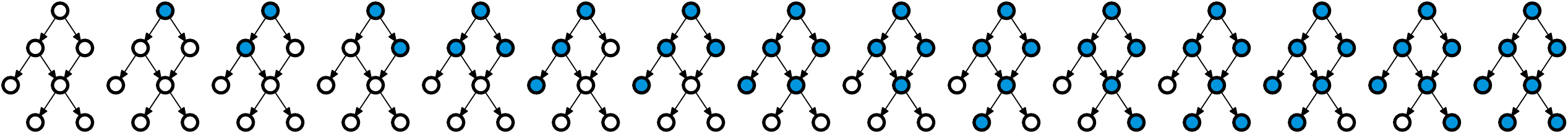}
\caption{Consistent subgraphs of an ontology $\mathcal{O}=(V,E)$ with $|V|=7$ vertices and $|E|=7$ edges. 
Observe that the reversal of all edges in the graph would lead to a reversed graph with the same number of consistent subgraphs (white vertices; Theorem \ref{thm:reverse}).}
\label{fig:example}
\end{figure}

We use $\cdag(\mathcal{O})$ to denote the desired function that takes a directed acyclic graph $\mathcal{O}$ as input and returns the number of consistent subgraphs in that graph. We use $\ctree(\mathcal{T})$ and $\cforest(\mathcal{F})$ for the special cases where the input graph is a rooted tree $\mathcal{T}$ or a forest $\mathcal{F}$, respectively.

\subsection{Counting Consistent Subgraphs in Trees}

We first discuss a special case where the input graph is a rooted tree; that is, when each non-root vertex has a single parent. In this case, there exists a linear algorithm in the number of vertices; see Lemma 1 in \cite{Ruskey1981}. We provide this solution in Algorithm \ref{algo:tree} with a minor modification resulting from the fact that our algorithm includes an empty tree in the total count. This algorithm naturally extends to collections of rooted trees. One can enumerate subtrees for each tree and take the product as the total count. We refer to this extended algorithm as $\cforest$~(not shown).

\begin{algorithm}[]\footnotesize
\SetKwInOut{Input}{Input}
\SetKwInOut{Output}{Output}
\SetKwBlock{Begin}{begin}{end}
\Input{A tree $\mathcal{T}_r$, rooted at $r$.}
\Output{The number of consistent subgraphs in $\mathcal{T}_r$.}
\SetAlgoLined
\SetKwProg{Fn}{Function}{}{end}
\Fn{$\ctree(\mathcal{T}_r)$}{
    \uIf{$\mathcal{T}_r$ is empty}
    {
        \Return 1\; \label{algo:tree:empty}
    }
    \Else
    {
        \Return $1+\prod_{u \in \mathcal{C}(r)} \ctree(\mathcal{T}_u)$\; \label{algo:tree:recursion}
    }
}
\caption{Counting the number of consistent subgraphs in rooted trees \cite{Ruskey1981}.}
\label{algo:tree}
\end{algorithm}

Algorithm~\ref{algo:tree} recursively traverses a tree in a pre-order manner. For any subtree rooted at vertex $v$, the number of consistent subtrees that contain $v$  equals the product of all subcounts from its subtrees rooted at each child. Additionally, we add $1$ for the only consistent subtree that does not contain $v$; i.e., the empty tree. The recursion terminates at the empty tree whose count is one. 

\begin{algorithm}[]\footnotesize
\SetKwInOut{Input}{Input}
\SetKwInOut{Output}{Output}
\SetKwBlock{Begin}{begin}{end}
\Input{A directed acyclic graph $\mathcal{O}$.}
\Output{The number of consistent subgraphs in $\mathcal{O}$.}
\SetAlgoLined
\SetKwProg{Fn}{Function}{}{end}
\Fn{$\cdag(\mathcal{O})$}{
    \uIf{$\mathcal{O}$ is a forest}
    {
        \Return $\cforest(\mathcal{O})$\;
    }
    \Else
    {
        Pick any vertex $u$ as the pivot\;
        \Return $\cdag(\subg{\mathcal{O}}{-\descPlus{u}}) + \cdag(\subg{\mathcal{O}}{-\ancePlus{u}})$\; \label{algo:dag:eqn}
    }
}
\caption{Counting the number of consistent subgraphs in directed acyclic graphs.}
\label{algo:dag}
\end{algorithm}

\subsection{Counting Consistent Subgraphs in Directed Acyclic Graphs}

Directed acyclic graphs generalize trees in that they allow for multi-parent vertices. 
Such vertices, however, break Algorithm~\ref{algo:tree} because the recursive branches are no longer independent. Algorithm~\ref{algo:dag} circumvents this problem by recursively decomposing a graph into two strictly smaller subgraphs according to a selected \emph{pivot} vertex. We will show in the next section that the number of consistent subgraphs in the two smaller graphs add up to be the number for the original graph (Line~\ref{algo:dag:eqn}, Algorithm~\ref{algo:dag}). The algorithm continues recursive enumeration until the graph becomes a forest, in which case it calls $\cforest$. Figure~\ref{fig:pivot} illustrates the process of graph decomposition with respect to the pivot vertex $u$. We note that any vertex can serve as pivot and will discuss the selection of pivots and how they impact the run time in Sections \ref{sec:pivotselection} and \ref{sec:runtime}.

\begin{figure*}[ht]
    \centering
    \includegraphics[width=0.8\textwidth]{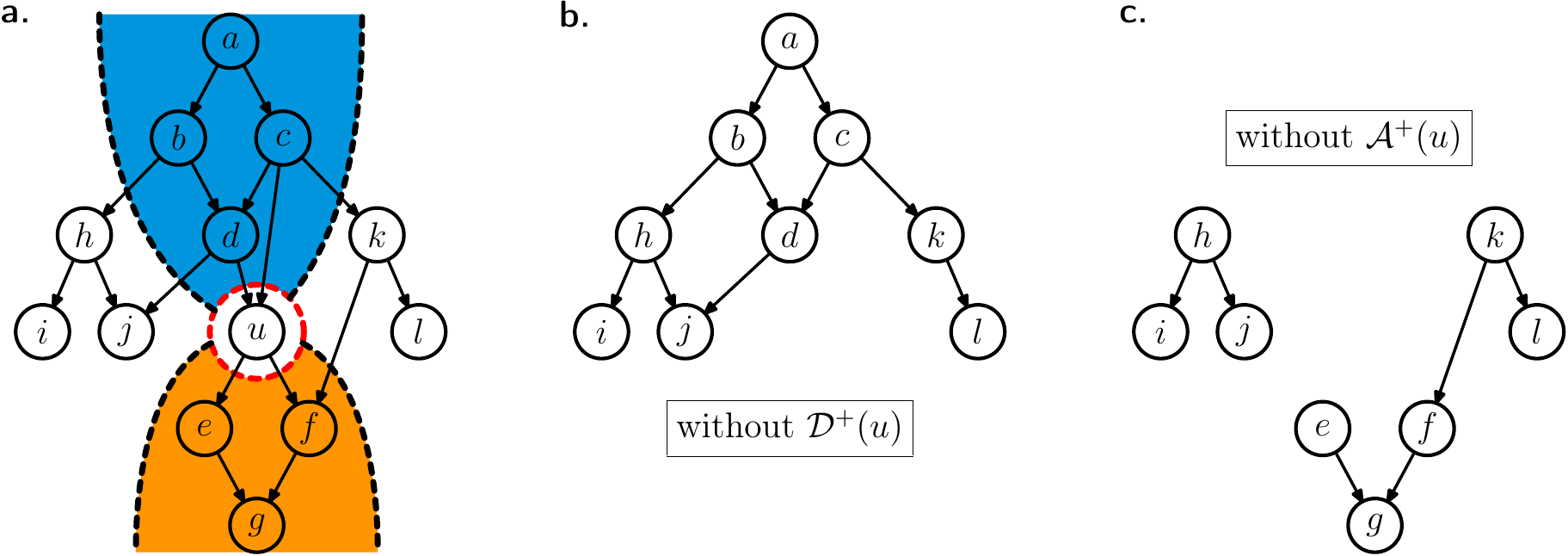}
    \caption{Illustration of graph decomposition. The enumeration problem of the original graph from panel (a) is split into two subproblems based on the pivot vertex $u$; shown in panels (b) and (c). The count in (b) corresponds to the number of consistent subgraphs in (a) that do not include $u$, while the count in (c) corresponds to the count of consistent subgraphs in (a) that include $u$. In panel (a), the set of descendants of $u$ is shaded in orange and the set of ancestors is shaded in blue.}
    \label{fig:pivot}
\end{figure*}

\subsection{Correctness and Complexity of the Algorithm}
\label{sec:correctness}
We first observe that the size of the problem in the number of vertices is guaranteed to decrease during recursive calls, thus ensuring that the algorithm terminates after a finite number of iterations. Next, we justify the equation corresponding to the Line~\ref{algo:dag:eqn} in Algorithm~\ref{algo:dag},
\begin{align}
\cdag(\mathcal{O}) = \cdag\left(\subg{\mathcal{O}}{-\descPlus{u}}\right) + \cdag\left(\subg{\mathcal{O}}{-\ancePlus{u}}\right). \label{eqn:decompose}
\end{align}

\begin{lemma} \label{lemma:withoutD}
Let $\cdag(\mathcal{O}\vert \neg u)$ be the number of consistent subgraphs in $\mathcal{O}$ that do not contain $u$. We have
$\cdag(\mathcal{O}\vert\neg u) = \cdag(\subg{\mathcal{O}}{-\descPlus{u}})$.
\end{lemma}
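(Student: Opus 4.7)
\medskip

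\noindent\textbf{Proof proposal.} My plan is to exhibit the identity map as a bijection between the two families of consistent subgraphs being counted. Let $\mathcal{S}_1$ denote the collection of consistent subgraphs of $\mathcal{O}$ whose vertex set does not contain $u$, and $\mathcal{S}_2$ the collection of all consistent subgraphs of $\subg{\mathcal{O}}{-\descPlus{u}}$. I will show that a vertex set $S$ belongs to $\mathcal{S}_1$ if and only if it belongs to $\mathcal{S}_2$; since both sides of the claimed identity count these collections, this suffices.

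The crucial preliminary observation is that any consistent subgraph of $\mathcal{O}$ that omits $u$ must in fact omit every descendant of $u$. Indeed, if $v \in \desc{u}$ lay in $S$, then pick any walk $u = w_0, w_1, \dots, w_k = v$ in $\mathcal{O}$; applying the consistency condition along the edges $(w_{k-1}, w_k), (w_{k-2}, w_{k-1}), \dots, (w_0, w_1)$ in succession would force $u \in S$, a contradiction. Hence every $S \in \mathcal{S}_1$ satisfies $S \subseteq V \setminus \descPlus{u}$. Symmetrically, I will need the dual fact that a vertex $v \notin \descPlus{u}$ cannot have any parent in $\mathcal{O}$ lying in $\descPlus{u}$, since such a parent would witness $v \in \desc{u}$. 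This ensures that, for any $v \in V \setminus \descPlus{u}$, the parent set $\mathcal{P}(v)$ is identical whether computed in $\mathcal{O}$ or in $\subg{\mathcal{O}}{-\descPlus{u}}$.

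With these two observations in hand, the two inclusions are routine. For $\mathcal{S}_1 \subseteq \mathcal{S}_2$, take $S \in \mathcal{S}_1$; then $S \subseteq V \setminus \descPlus{u}$ by the observation above, and for any $v \in S$ and any edge $(u', v)$ in $\subg{\mathcal{O}}{-\descPlus{u}}$, this edge is also an edge of $\mathcal{O}$, so consistency in $\mathcal{O}$ yields $u' \in S$. For $\mathcal{S}_2 \subseteq \mathcal{S}_1$, take $S \in \mathcal{S}_2$; then $S \subseteq V \setminus \descPlus{u}$, so in particular $u \notin S$, and for any $v \in S$ and any edge $(u', v) \in E$, the parent preservation fact gives $u' \in V \setminus \descPlus{u}$, so $(u', v)$ is an edge of $\subg{\mathcal{O}}{-\descPlus{u}}$, and consistency there forces $u' \in S$.

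I do not anticipate a real obstacle. The only subtle point—and the step I would write out most carefully—is the parent-preservation fact $\mathcal{P}_{\mathcal{O}}(v) \subseteq V \setminus \descPlus{u}$ for $v \notin \descPlus{u}$, since it is exactly what lets consistency transfer back and forth between $\mathcal{O}$ and its induced subgraph; everything else is direct unfolding of the definitions of consistent subgraph and induced subgraph.
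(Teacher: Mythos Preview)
Your proposal is correct and follows essentially the same bijection-via-identity approach as the paper, proving both inclusions between the two families of consistent subgraphs. If anything, you are more explicit than the paper on the parent-preservation point $\mathcal{P}_{\mathcal{O}}(v)\subseteq V\setminus\descPlus{u}$ for $v\notin\descPlus{u}$, which the paper's proof uses implicitly when asserting that a consistent subgraph of $\subg{\mathcal{O}}{-\descPlus{u}}$ is automatically consistent in $\mathcal{O}$.
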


\begin{proof}
The equal cardinality of the two sets of consistent subgraphs is demonstrated by showing that both sets are contained in each other. For any $S \subseteq V-\descPlus{u}$ that induces a consistent subgraph of $\subg{\mathcal{O}}{-\descPlus{u}}$, it also induces a unique consistent subgraph in $\mathcal{O}$. Also, since none of them contains $u$, we have $\cdag(\subg{\mathcal{O}}{-\descPlus{u}}) \leq \cdag(\mathcal{O}\vert \neg u)$. Conversely, for any consistent subgraph induced by $S$ such that $u \notin S$, we have $\forall v \in \descPlus{u}, v \notin S$ by the definition of consistency. Therefore, $S$ also induces a consistent subgraph in $\subg{\mathcal{O}}{-\descPlus{u}}$. That is, $\cdag(\mathcal{O}\vert \neg u) \leq \cdag(\subg{\mathcal{O}}{-\descPlus{u}})$.

\end{proof}

\begin{lemma} \label{lemma:withoutA}
Let $\cdag(\mathcal{O}\vert u)$ be the number of consistent subgraphs in $\mathcal{O}$ that contain $u$. We have
$\cdag(\mathcal{O}\vert u) = \cdag(\subg{\mathcal{O}}{-\ancePlus{u}})$.
\end{lemma}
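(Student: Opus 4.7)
The plan is to mirror the strategy of Lemma \ref{lemma:withoutD}, but with an extra bookkeeping step: whereas the ``$\neg u$'' case can simply identify the two sets of subgraphs (since the unwanted vertices cannot appear), here every consistent subgraph containing $u$ must also contain all of $\ancePlus{u}$, so we must add those vertices back when passing between $\mathcal{O}$ and $\subg{\mathcal{O}}{-\ancePlus{u}}$. I would establish a bijection
\[
\Phi : \{S \subseteq V : \subg{\mathcal{O}}{S} \text{ consistent},\ u \in S\} \longrightarrow \{T \subseteq V - \ancePlus{u} : \subg{\mathcal{O}[-\ancePlus{u}]}{T} \text{ consistent}\},
\]
defined by $\Phi(S) = S \setminus \ancePlus{u}$, with inverse $T \mapsto T \cup \ancePlus{u}$.

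First I would verify that $\Phi$ is well-defined. Given a consistent $S$ containing $u$, consistency forces $\ance{u} \subseteq S$ by an easy induction on walk length in $\mathcal{O}$, hence $\ancePlus{u} \subseteq S$ and $T := S \setminus \ancePlus{u} \subseteq V - \ancePlus{u}$. To see that $T$ induces a consistent subgraph of $\subg{\mathcal{O}}{-\ancePlus{u}}$, take any edge $(w,v)$ of $\subg{\mathcal{O}}{-\ancePlus{u}}$ with $v \in T$; then $(w,v) \in E$ and $v \in S$, so $w \in S$ by consistency of $S$, and since $w \notin \ancePlus{u}$ (the edge lies in the induced subgraph), we obtain $w \in T$.

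Next I would check that the inverse map is well-defined. Given a consistent $T$ in $\subg{\mathcal{O}}{-\ancePlus{u}}$, set $S := T \cup \ancePlus{u}$, which clearly contains $u$. To verify consistency of $S$ in $\mathcal{O}$, pick any $(w,v) \in E$ with $v \in S$ and split into cases: if $v \in \ancePlus{u}$, then $w$ is an ancestor of $v$, hence an ancestor of $u$ (or equals $u$ only if $v$ has $u$ as a parent, impossible since that would create a cycle), so $w \in \ance{u} \subseteq \ancePlus{u} \subseteq S$; if $v \in T$ and $w \in \ancePlus{u}$, then $w \in S$ directly; finally if $v \in T$ and $w \notin \ancePlus{u}$, then $(w,v)$ is an edge of $\subg{\mathcal{O}}{-\ancePlus{u}}$, so consistency of $T$ yields $w \in T \subseteq S$. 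The two maps are mutually inverse by construction, giving the desired equality of cardinalities.

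The main subtlety, and where I would be most careful, is the last case-split above: one must not forget that the induced subgraph $\subg{\mathcal{O}}{-\ancePlus{u}}$ drops exactly those edges whose tail lies in $\ancePlus{u}$, so the consistency hypothesis on $T$ only controls edges with both endpoints outside $\ancePlus{u}$; edges crossing into $\ancePlus{u}$ must be handled separately, but they are harmless because their tails already belong to $S$ by definition. This is the only place where the argument differs in spirit from the descendant version.
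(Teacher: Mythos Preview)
Your proposal is correct and follows essentially the same approach as the paper: both establish the bijection $S \mapsto S\setminus\ancePlus{u}$ with inverse $T \mapsto T\cup\ancePlus{u}$ between consistent subgraphs of $\mathcal{O}$ containing $u$ and consistent subgraphs of $\subg{\mathcal{O}}{-\ancePlus{u}}$. Your treatment is in fact more careful than the paper's, which asserts without detail that $T\cup\ancePlus{u}$ is consistent in $\mathcal{O}$; your case-split on whether $w\in\ancePlus{u}$ makes explicit exactly the point the paper leaves implicit.
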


\begin{proof}
As in Lemma~\ref{lemma:withoutD}, for any $S \subseteq V-\ancePlus{u}$ that induces a consistent subgraph of $\subg{\mathcal{O}}{-\ancePlus{u}}$, $S \cup \ancePlus{u}$ also induces a unique consistent subgraph (that contains $u$) in the original graph. That is, $\cdag(\subg{\mathcal{O}}{-\ancePlus{u}}) \leq \cdag(\mathcal{O}\vert u)$. Also, for any consistent subgraph induced by $S$ and $u \in S$, we have $\ancePlus{u} \subseteq S$ by the definition of consistency. Note that the uniqueness of $S$ implies the uniqueness of $S-\ancePlus{u}$. We can see that the subgraph induced by $S-\ancePlus{u}$ in $\subg{\mathcal{O}}{-\ancePlus{u}}$ is consistent. Given $\forall w \in S-\ancePlus{u}$, and $(v, w)$ being an edge in $\subg{\mathcal{O}}{-\ancePlus{u}}$, we must have $v \in S-\ancePlus{u}$ as well, due to the consistency of $\subg{\mathcal{O}}{S}$ with respect to the original graph. That is, $\cdag(\mathcal{O}\vert u) \leq \cdag(\subg{\mathcal{O}}{-\ancePlus{u}})$.

\end{proof}

\begin{theorem}
Given an ontology $\mathcal{O}=(V,E)$ and any $u \in V$, the number of consistent subgraphs in $\mathcal{O}$ equals the sum of the numbers of consistent subgraphs in $\subg{\mathcal{O}}{-\descPlus{u}}$ and $\subg{\mathcal{O}}{-\ancePlus{u}}$.
\end{theorem}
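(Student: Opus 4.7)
The plan is to reduce this theorem to the two preceding lemmas by a clean case split on whether the pivot vertex $u$ belongs to a given consistent subgraph. Every consistent subgraph of $\mathcal{O}$ either contains $u$ or does not, and these two cases are clearly mutually exclusive and exhaustive. So the first step is to invoke the law of total counting and write
\begin{equation*}
\cdag(\mathcal{O}) = \cdag(\mathcal{O} \vert u) + \cdag(\mathcal{O} \vert \neg u),
\end{equation*}
where the two conditional counts are exactly the quantities defined in Lemmas~\ref{lemma:withoutA} and \ref{lemma:withoutD}.

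Next, I would simply substitute: by Lemma~\ref{lemma:withoutD} the term $\cdag(\mathcal{O} \vert \neg u)$ equals $\cdag(\subg{\mathcal{O}}{-\descPlus{u}})$, and by Lemma~\ref{lemma:withoutA} the term $\cdag(\mathcal{O} \vert u)$ equals $\cdag(\subg{\mathcal{O}}{-\ancePlus{u}})$. Summing these two substitutions yields the claimed decomposition identity, which is the equation used in Line~\ref{algo:dag:eqn} of Algorithm~\ref{algo:dag}.

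There is essentially no hard step here, since all of the combinatorial work is already done inside the two lemmas; the only thing to verify carefully is that the partition into the ``contains $u$'' and ``does not contain $u$'' classes is legitimate, which is immediate because membership of $u$ in the vertex set $S$ of a consistent subgraph is a binary predicate. I would state this dichotomy explicitly at the top of the proof to make the bookkeeping transparent, and then close with the one-line arithmetic combining the two lemmas. If anything warrants a sentence of commentary, it is that the theorem holds for \emph{any} choice of $u \in V$, which justifies the freedom to pick pivots in subsequent sections on pivot selection and runtime analysis.
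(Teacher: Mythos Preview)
Your proposal is correct and mirrors the paper's own proof exactly: the paper simply states that the decomposition identity holds by combining Lemmas~\ref{lemma:withoutD} and \ref{lemma:withoutA}, which is precisely your partition into the $u\in S$ and $u\notin S$ cases followed by substitution. Your write-up is just a slightly more verbose rendering of the paper's one-line argument.
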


\begin{proof}
Equation~\ref{eqn:decompose} holds by combining Lemmas~\ref{lemma:withoutD} and \ref{lemma:withoutA}.
\end{proof}

To analyze complexity of the algorithm, let $n$ be the number of vertices in the graph and $m$ be the number of multi-parent vertices. Assuming a multi-parent vertex is always selected as pivot, we can express the run time complexity $T(n)$ via the following recurrence
\begin{equation*}
T(n) \leq T(n-1) + T(n-3) + f(n), \label{eqn:worst_bound}
\end{equation*}
where $f(n)$ incorporates the time to select the pivot, split the graph and add two large integers. Let us further assume that the larger of the two graphs after decomposition contains $n - n/k$ elements, where $2 \leq k \leq n$. It is now straightforward to show that $T(n) = O(f(n) 2^{\min(m, s(k))})$, where $s(k)=O(n)$ if $k=O(n)$ and $s(k)=O(\log n)$ if $k=O(1)$.

We can now see that the algorithm is exponential in the worst case; however, it reduces to a polynomial algorithm when $m=O(\log n)$ or when $k=O(1)$. Assuming linear time to conduct graph decomposition and a constant time for addition/multiplication, we obtain $T(n)=O(n^2)$. 

\section{Advanced Algorithms}
The run-time of the algorithm heavily depends on the structure of the ontology and the selection of pivots. Here we discuss several practical considerations aimed at accelerating Algorithm \ref{algo:dag}. Once we conclude this discussion, the full method will be presented in  Algorithm \ref{algo:fulldag} (Section \ref{sec:results}).

\subsection{Pruning Branching Components}\label{sec:pruning}
It is easy to observe that when the ontology consists of multiple connected components, these components can be independently and, if needed, simultaneously processed. We take this reasoning a step further to consider a special scenario of nearly disconnected graphs where (i) the two components are connected via a single vertex and (ii) all vertices in one component are descendants of this vertex.

\begin{definition}
Given a graph $\mathcal{O} = (V,E)$ and $u \in V$, $\subg{\mathcal{O}}{\desc{u}}$ is called a \emph{branching component} if \,$\forall v \in V-\descPlus{u}$ and $ \forall w \in \desc{u}, (v, w) \notin E$. Vertex $u$ is called a \emph{branching vertex}.
\end{definition}

\begin{figure*}[ht]
    \centering
    \includegraphics[width=\textwidth]{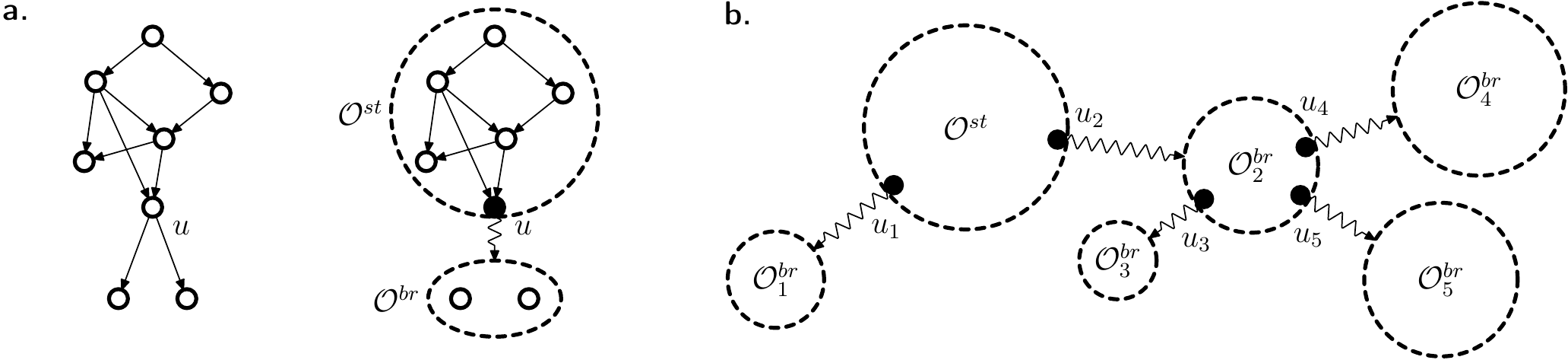}
    \caption{Illustration of branching components. Panel~(a) shows a branching vertex $u$ that separates the graph into a stem component $\mathcal{O}^{st}$ and a branching component $\mathcal{O}^{br}$. The collection of edges from $u$ to $\mathcal{O}^{br}$ is replaced by a zigzag arrow. Panel~(b) shows a component-wise tree structure.}
    \label{fig:prune}
\end{figure*}

Figure~\ref{fig:prune}a gives an example in which $u$ is a branching vertex, since the removal of $u$ disconnects $\desc{u}$ (i.e., the branching component, $\mathcal{O}^{br}$) from the rest of the graph. We refer to the remaining part of the graph as the stem component, $\mathcal{O}^{st}$. More generally, Figure~\ref{fig:prune}b shows a graph with a component-wise tree structure, where branching vertices serve as hinges of branching component to their corresponding stems. We will use $(\mathcal{O}^{st}, \mathcal{O}^{br}, u)$ to denote the desired structure.

Given $(\mathcal{O}^{st}, \mathcal{O}^{br}, u)$, we demonstrate that $\cdag({\mathcal{O}})$ can be decoupled into two sequential subproblems: (i) $\cdag(\mathcal{O}^{br})$ and (ii) $\cdag(\mathcal{O}^{st})$. We use $\varphi(u)$ for the subtotal of consistent subgraphs in the branching component $\mathcal{O}^{br}$. We also notice that the entire branching component can be pruned once $\varphi(u)$ is computed, making $u$ a leaf vertex in $\mathcal{O}^{st}$. Therefore, we modify the algorithm so as to allow a subtotal count $\varphi(u)$ for every vertex as if a branching component has been pruned from $u$. Notice that $\varphi(u) = 1$ for all intermediate vertices and original leaves.

With the introduction of $\varphi(u)$, the recursive equation in Algorithm~\ref{algo:tree} becomes
\begin{align} \label{eqn:tree_recursion_gen}
    \ctree(\mathcal{T}_r) = \varphi(r) + \prod_{u\in\mathcal{C}(r)}\ctree(\mathcal{T}_u).
\end{align}

\noindent Similarly, Equation~\ref{eqn:decompose}; i.e., Line~\ref{algo:dag:eqn} in Algorithm~\ref{algo:dag}, must be modified to 
\begin{align} \label{eqn:decompose_gen}
    \cdag(\mathcal{O}^{st}) = \cdag\left(\subg{\mathcal{O}^{st}}{-\descPlus{u}}\right) + \varphi(u)\cdot \cdag\left(\subg{\mathcal{O}^{st}}{-\ancePlus{u}}\right),
\end{align}
\noindent where $\varphi(u)$ accounts for the fact that for any consistent subgraph $S_i$ in the pruned $\mathcal{O}^{br}$ and any consistent subgraph $S_j$ in $\subg{\mathcal{O}^{st}}{-\ancePlus{u}}$, $\subg{\mathcal{O}}{S_i\cup S_j \cup \ancePlus{u}}$ is a distinct consistent subgraph in $\mathcal{O}$. The approach naturally extends to multiple (hierachical) branching components such that we compute the subtotal of consistent subgraphs within each component and agglomerate them in a reversed topological order.

The pruning operation is preferred before each instance of decomposition for two main reasons: (i) it divides the problem into smaller non-overlapping subproblems, while a direct decomposition usually results in substantial overlapping subproblems; (ii) although a full parallelization over components is restricted since stem components have to be computed only after all of their branching components are finished, the unordered components can be computed simultaneously. For example, as in Figure~\ref{fig:prune}b, $\mathcal{O}^{br}_1$, $\mathcal{O}^{br}_3$, $\mathcal{O}^{br}_4$ and $\mathcal{O}^{br}_5$ can be computed in parallel.

\subsection{Reverse Graphs}\label{sec:reversal}
Let $\reverse{\mathcal{O}}=(V,E^R)$ be the reverse graph of $\mathcal{O}$, where $E^R = \{ (u, v) \vert (v, u) \in E \}$. We show that the number of consistent subgraphs in $\mathcal{O}$ equals that in $\reverse{\mathcal{O}}$.
\begin{lemma}\label{lemma:compl}
If $\subg{\mathcal{O}}{S}$ is a consistent subgraph of $\mathcal{O}$, $\subg{\reverse{\mathcal{O}}}{-S}$ is a consistent subgraph of $\reverse{\mathcal{O}}$.
\end{lemma}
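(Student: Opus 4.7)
The plan is to unfold both consistency conditions and show that they are contrapositives of each other once we account for edge reversal. Specifically, the consistency of $\subg{\mathcal{O}}{S}$ is exactly the statement that $S$ is closed under parents (ancestors) in $\mathcal{O}$: for every $v\in S$, every $u$ with $(u,v)\in E$ must also lie in $S$. Dually, the consistency of $\subg{\reverse{\mathcal{O}}}{-S}$ asks that $V-S$ be closed under parents in $\reverse{\mathcal{O}}$, which, since $E^R$ reverses $E$, is the same as asking that $V-S$ be closed under children (descendants) in $\mathcal{O}$.

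The key step, then, is to verify the equivalence: $S$ is ancestor-closed in $\mathcal{O}$ if and only if $V-S$ is descendant-closed in $\mathcal{O}$. Assume $\subg{\mathcal{O}}{S}$ is consistent, pick any $v\in V-S$ and any edge $(v,w)\in E$, and suppose for contradiction that $w\in S$. Then consistency of $\subg{\mathcal{O}}{S}$ applied to $w$ forces the parent $v$ to belong to $S$, contradicting $v\in V-S$. Hence $w\in V-S$, which is precisely the required consistency of $\subg{\reverse{\mathcal{O}}}{-S}$ in $\reverse{\mathcal{O}}$ (where the edge appears as $(w,v)\in E^R$ and the tail $w$ belongs to $V-S$ whenever the head $v$ does).

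I do not anticipate a real obstacle here; the entire argument is a definition chase relying on the observation that reversing edges swaps the roles of ancestors and descendants, and that complementing the vertex set swaps ``closed under parents'' with ``closed under children.'' The only thing that needs a sentence of care is making sure the induced edge set $E^{V-S}$ on the reverse-graph side really is just $\{(u,v)\in E^R : u,v\in V-S\}$, so that the consistency check reduces cleanly to the edge-by-edge contrapositive above. Once that is noted, the conclusion is immediate and the lemma is proved.
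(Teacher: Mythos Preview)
Your proposal is correct and follows essentially the same contradiction argument as the paper: assume a vertex of $V-S$ has a parent in $\reverse{\mathcal{O}}$ (equivalently, a child in $\mathcal{O}$) lying in $S$, and derive a contradiction from the consistency of $\subg{\mathcal{O}}{S}$. The only cosmetic difference is that the paper phrases the check in terms of ancestors in $\reverse{\mathcal{O}}$ while you work directly at the edge level, which is arguably closer to the stated definition of consistency; the logical content is identical.
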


\begin{proof}
We prove this Lemma by contradiction. For $\forall u \in V-S$ and $\forall v \in \anceR{u}$,\footnote{We use $\anceR{u}$ and $\descR{u}$ for ancestors and descendants of $u$ in $\reverse{\mathcal{O}}$; $\anceR{u} = \desc{u}$ and $\descR{u} = \ance{u}$.} if $v \notin V-S$, then $u \in \ance{v} \subseteq S$ due to the consistency of $\subg{\mathcal{O}}{S}$. This contradicts $u \in V-S$. Therefore, the assumption $v \notin V-S$ is false and we have $\forall u \in V-S$, $\forall v \in \anceR{u} \subseteq V-S$. That is, $\subg{\reverse{\mathcal{O}}}{-S}$ is consistent.
\end{proof}

This Lemma demonstrates that all complementary white vertices in Figure~\ref{fig:example} form consistent subgraphs in the reverse graph.

\begin{theorem}\label{thm:reverse}
Given an ontology $\mathcal{O}$, $\cdag(\mathcal{O}) = \cdag(\reverse{\mathcal{O}})$.
\end{theorem}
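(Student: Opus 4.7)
The plan is to exhibit an explicit bijection between consistent subgraphs of $\mathcal{O}$ and those of $\reverse{\mathcal{O}}$ using vertex-set complementation, and then to conclude equality of cardinalities. Concretely, I would define the map $\Phi: 2^V \to 2^V$ by $\Phi(S) = V - S$. This is obviously an involution on the full power set, hence a bijection from $2^V$ to itself.

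The first substantive step is to verify that $\Phi$ sends consistent subsets of $\mathcal{O}$ into consistent subsets of $\reverse{\mathcal{O}}$. This is exactly the content of Lemma~\ref{lemma:compl}: if $\subg{\mathcal{O}}{S}$ is consistent then $\subg{\reverse{\mathcal{O}}}{-S} = \subg{\reverse{\mathcal{O}}}{\Phi(S)}$ is consistent. So $\Phi$ restricts to a well-defined map from the set of consistent subgraphs of $\mathcal{O}$ into the set of consistent subgraphs of $\reverse{\mathcal{O}}$.

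Next, I would observe that $\reverse{(\reverse{\mathcal{O}})} = \mathcal{O}$, since reversing edges twice recovers the original graph. Applying Lemma~\ref{lemma:compl} with $\reverse{\mathcal{O}}$ in place of $\mathcal{O}$ then shows that $\Phi$ also maps consistent subsets of $\reverse{\mathcal{O}}$ into consistent subsets of $\mathcal{O}$. Since $\Phi$ is its own inverse on $2^V$, the two restricted maps are mutual inverses, so $\Phi$ is a bijection between the two families of consistent subgraphs. Counting both sides then yields $\cdag(\mathcal{O}) = \cdag(\reverse{\mathcal{O}})$.

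There is no serious obstacle: the only thing to be careful about is the symmetry step, making sure we legitimately reuse Lemma~\ref{lemma:compl} on $\reverse{\mathcal{O}}$ rather than reproving the reverse direction by hand. Given that Lemma~\ref{lemma:compl} is stated for an arbitrary ontology and that the reverse of a DAG is still a DAG, this reuse is immediate. The proof should therefore consist of essentially three short sentences: invoke Lemma~\ref{lemma:compl}, invoke it again on $\reverse{\mathcal{O}}$, and conclude via the involution property of complementation.
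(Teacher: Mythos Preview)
Your proposal is correct and follows essentially the same approach as the paper: the paper also uses the complementation map $f(\subg{\mathcal{O}}{S}) = \subg{\reverse{\mathcal{O}}}{-S}$ and invokes Lemma~\ref{lemma:compl} to conclude it is a bijection. Your write-up is in fact more careful than the paper's, since you explicitly justify bijectivity via the involution property of complementation and the symmetry $\reverse{(\reverse{\mathcal{O}})}=\mathcal{O}$, whereas the paper simply asserts the bijection in one sentence.
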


\begin{proof}
Given Lemma~\ref{lemma:compl}, we see that the mapping $f(\subg{\mathcal{O}}{S}) = \subg{\reverse{\mathcal{O}}}{-S}$ is a \emph{bijection} between the two sets of consistent subgraphs. Therefore, the two sets are of equal cardinality.
\end{proof}

Theorem~\ref{thm:reverse} permits graph reversal at any point during the algorithm depending on which of the graphs is more likely to terminate first. For example, we can always choose the one with fewer multi-parent vertices so as to greedily reduce the upper bound of recursive calls. It is worth noting that all the leaves become roots in the reverse graph. Therefore, in the final algorithm that incorporates both pruning and reversing modules, we generalize the algorithm to allow for $\varphi > 1$ on roots (branching vertices in the reverse sense) in order to ensure compatibility.

Having $\varphi(r)>1$ on a root indicates that all the ancestors of $r$ have been pruned out. For trees (after pruning), we have $\subg{\mathcal{O}}{-\descPlus{r}} = \subg{\mathcal{O}}{\ance{r}}$. With Lemma~\ref{lemma:withoutD} and Theorem~\ref{thm:reverse}, we have
\begin{align*}
    \cdag(\mathcal{O}\vert \neg r) = \cdag(\subg{\mathcal{O}}{-\descPlus{r}}) = \cdag(\subg{\mathcal{O}}{\ance{r}}) = \cdag(\subg{\reverse{\mathcal{O}}}{\mathcal{D}^R(r)}) = \varphi(r).
\end{align*}
On the other hand, for any consistent subgraph $S$ containing $r$, $S-\ancePlus{r}$ induces a consistent subgraph in $\subg{\mathcal{O}}{\desc{r}}$ and vice versa; thus, 
\begin{align*}
    \cdag(\mathcal{O}\vert r) = \prod_{u\in \mathcal{C}(r)}\ctree(\mathcal{T}_u).
\end{align*}
Hence, these two subtotals sum to be the total count and Equation~\ref{eqn:tree_recursion_gen} remains unchanged. However, if a root $r$ with $\varphi(r)>1$ is selected to be the pivot, we have the following equation according to Theorem~\ref{thm:reverse} and Equation~\ref{eqn:decompose_gen},
\begin{align*}
    \cdag(\mathcal{O}) = \cdag(\reverse{\mathcal{O}}) &= \cdag(\subg{\reverse{\mathcal{O}}}{-\mathcal{D}^{R+}(r)}) + \varphi(r)\cdot\cdag(\subg{\reverse{\mathcal{O}}}{-\mathcal{A}^{R+}(r)}) \\
    &=
    \cdag(\subg{\mathcal{O}}{-\ancePlus{r}}) + \varphi(r)\cdot\cdag(\subg{\mathcal{O}}{-\descPlus{r}}),
\end{align*}
\noindent whereas Equation~\ref{eqn:decompose_gen} remains unchanged for non-root vertices.

\subsection{Pivot Selection}
\label{sec:pivotselection}

As alluded to before, the selection of vertices used for partitioning has the potential to significantly change the computation time. It is therefore reasonable to devise a strategy for pivot selection. Besides a random selection of multi-parent vertices~(mpv's), which aims at directly converting DAGs into trees one step at a time, we also consider three other pivot heuristics. The first strategy is to pick a vertex with the maximum degree, with random selection in case of ties, because decomposing the graph according to such vertices may increase the chance of having either disconnected components or branching components. The second strategy selects the pivot so as to minimize $e-n+r$ over the two subproblems, where $e$, $n$, and $r$ are the number of edges, vertices, and roots in the two components. We refer to this quantity as ``bound'' since it is an upper bound of the number of mpv's in the graph (see Supplementary Materials for the proof). Note that it is closely related to the cyclomatic number of the graph. Finally, the third strategy simulates a unit network flow for all vertices running in the direction from leaves to the roots and selects the ``bottleneck'' vertex; i.e., the one that maximizes the ratio of the flow in the vertex and the number of its descendants (see Supplementary Materials for this pivot selection algorithm). These strategies will be empirically compared in Section~\ref{sec:results}.

\subsection{Hashing}\label{sec:hashing}
It can occur during the recursive procedure that certain subgraphs require repeated enumeration. In Figure~\ref{fig:pivot}, for example, the subgraph $h$-$i$-$j$ is present in both subproblems shown in Figures~\ref{fig:pivot}b-c. Computing the count for this subgraph would emerge in the Figure~\ref{fig:pivot}b subproblem if the ensuing decomposition were based on vertex $d$, although it would not emerge if the partitioning were based on vertex $j$. Interestingly, the subgraph $k$-$l$ would be counted twice in the Figure~\ref{fig:pivot}b subproblem; i.e., when both $\ancePlus{d}$ and $\descPlus{d}$ are removed, and it would then appear one more time in the Figure~\ref{fig:pivot}c subproblem.

To avoid repeated enumeration, whenever a solution to a subproblem is obtained, the count for this subproblem is stored. Then, during the recursive calls, we first check if the result is already available before further calculation. To hash a result, we use the sorted IDs of all vertices in the subgraph as a key. Obviously, this key is unique because it corresponds to a vertex-induced subgraph of $\mathcal{O}$. For the pruned subgraph, we store the key of the subgraph along with the branching vertex. Whenever the ID of the branching vertex is used to generate a key, the stored key of the corresponding subgraph is appended to the vertex's ID with parentheses around it.

\begin{algorithm}[ht]\footnotesize
\SetKwInOut{Input}{Input}
\SetKwInOut{Output}{Output}
\SetKwBlock{Begin}{begin}{end}
\Input{A directed acyclic graph $\mathcal{O}$}
\Output{The number of consistent subgraphs in $\mathcal{O}$.}
\SetAlgoLined
\SetKwProg{Fn}{Function}{}{end}
\Fn{$\cdag^{\ast}(\mathcal{O})$}{
    $(count, succeed) \gets \texttt{lookup\_hash\_table}(\mathcal{O})$\;
    \If{$succeed$}{
        \Return $count$\;
    }
    \If(\tcp*[f]{check the number of multi-parent vertices}){$m(\reverse{\mathcal{O}}) < m(\mathcal{O})$}{
        $\mathcal{O} \gets$ \texttt{reverse}($\mathcal{O}$)\;
    }
    $count \gets 1$\;
    \ForEach{connected component $\mathcal{O}_{i}$}{
        \uIf{$\mathcal{O}_{i}$ is a tree}
        {
            $count_i \gets \ctree^{\ast}(\mathcal{O}_{i})$\;
        }
        \Else
        {
            $\{u_j\} \gets \texttt{branching\_vertices}(\mathcal{O}_{i})$\;
            \ForEach{$u_j$ in a reversed topological order}
            {
                $\varphi(u_j) \gets \cdag^{\ast}(\subg{\mathcal{O}_{i}}{\desc{u_j}})$\;
                $\texttt{prune}(\desc{u_j})$\;
            }
            $u \gets \texttt{pivot}(\mathcal{O}^{st}_{i})$\;
            \uIf{$u$ is a root}
            {
                $count_i \gets \cdag^{\ast}(\subg{\mathcal{O}^{st}_{i}}{-\ancePlus{u}}) + \varphi(u)\cdot\cdag^{\ast}(\subg{\mathcal{O}^{st}_{i}}{-\descPlus{u}})$\;
            }
            \Else
            {
                $count_i \gets \cdag^{\ast}(\subg{\mathcal{O}^{st}_{i}}{-\descPlus{u}}) + \varphi(u)\cdot\cdag^{\ast}(\subg{\mathcal{O}^{st}_{i}}{-\ancePlus{u}})$\;
            }
        }
        $count \gets count \cdot count_i$\;
    }
    $\texttt{insert\_hash\_table}(\mathcal{O}, count)$\;
    \Return $count$\;
}
\caption{The advanced version of Algorithm~\ref{algo:dag} with optimization modules.}
\label{algo:fulldag}
\end{algorithm}

\section{Experiments and Results}
\label{sec:results}
We empirically evaluate the enumeration procedure from Algorithm \ref{algo:fulldag} and various practical speedups using randomly generated graphs. We then apply this algorithm to four biomedical ontologies to gain insight into the sizes of their concept annotation spaces.

\subsection{Run-time Evaluation}
\label{sec:runtime}
We generated two sets of graphs to investigate the efficacy of our algorithm. Each set contained 1000 graphs with either 25  or 100 vertices. To construct each graph the vertices were added sequentially, with the proposed in-degree in-deg($v$) of the $k$-th vertex $v$ generated according to a Poisson distribution with parameter $\lambda$. This vertex then became a child of $\min$(in-deg($v$),~$k-1$) previously generated vertices that were themselves selected uniformly randomly. The parameter $\lambda$ was selected according to the $\Gamma(2.0, 1.0)$ prior for each new graph and kept constant until the graph was completed. 

With these two sets of simulated graphs, we ran our algorithm with different modules and pivot selection strategies. In particular, we evaluate pivot selection based on (i) random selection of vertices, (ii) random selection of multi-parent vertices, (iii) the degree criterion, (iv) the bound criterion, and (v) the bottleneck criterion. For each pivoting strategy, we subsequently add the pruning component, then hashing, and finally the graph reversal. The criterion for graph reversal was the number of multi-parent vertices; i.e., a graph will be reversed at any point during the recursive process if the reversed graph contains fewer multi-parent vertices.

We report the average wall-time and average number of recursive calls over the two sets of 1000 graphs ($|V|=25$ in Table \ref{tab:25nodes}; $|V|=100$ in Table \ref{tab:100nodes}). For the smaller graphs, we also ran a brute-force algorithm that was further convenient to empirically evaluate the correctness of our algorithm. We see that simpler schemes perform better on small graphs where the number of recursive calls per graph has not exceeded a few hundreds. On the other hand, the advanced techniques show tangible benefits on the larger graphs reducing the number of recursive calls and total computation time by orders of magnitude. It is possible to envision other variations that could result in further speedups; e.g., selecting multi-parent pivots with the highest degree. These refinements, however, were beyond the scope of this paper.

\begin{table*}\small
    \centering
    \caption{Experiments with simulated graphs with $|V|=25$ vertices. Each field in the table summarizes the per-graph wall-time over a set of 1000 graphs as well as the per-graph number of recursive calls, except for the brute-force method. The columns represent pivot selection strategies: (i) random, (ii) random multi-parent vertex (mpv), (iii) minimum bound, (iv) maximum degree, and (v) bottleneck. The rows represent successive additions of practical modules for speedups: (\protect\drawcircle{white}) basic approach from Algorithm \ref{algo:dag}, (\protect\drawcircle{myred}) pruning, (\protect\drawcircle{myred}\;\protect\drawcircle{myyellow}) pruning and hashing, (\protect\drawcircle{myred}\;\protect\drawcircle{myyellow}\;\protect\drawcircle{myblue}) pruning, hashing, and graph reversal.}
    \begin{tabular}{
        |c|
        l|
        r@{\hspace{1pt}}r|
        r@{\hspace{1pt}}r|
        r@{\hspace{1pt}}r|
        r@{\hspace{1pt}}r|
        r@{\hspace{1pt}}r|
        } \hline
        brute-force & module   & \multicolumn{2}{c|}{random}     & \multicolumn{2}{c|}{random mpv} & \multicolumn{2}{c|}{min.~bound} & \multicolumn{2}{c|}{max.~degree} & \multicolumn{2}{c|}{bottleneck} \\ \hline \hline
        \multirow{4}{*}{571ms} &\drawcircle{white}
                               & 22.5ms & (313) & 5.3ms & (39)   & 25.7ms & (23)  & 1.2ms & (28)    & 18.2ms & (47) \\ \cline{2-12}
                               &\drawcircle{myred}
                               & 21.1ms & (97)  & 14.3ms & (44)  & 26.4ms & (25)  & 10.2ms & (28)   & 25.3ms & (44) \\ \cline{2-12}
                               &\drawcircle{myred}\;\drawcircle{myyellow}
                               & 19.6ms & (71)  & 14.4ms & (39)  & 26.3ms & (23)  & 10.2ms & (26)   & 24.9ms & (34) \\ \cline{2-12}
                               &\drawcircle{myred}\;\drawcircle{myyellow}\;\drawcircle{myblue}
                               & 19.2ms & (67)  & 7.5ms & (28)   & 25.5ms & (23)  & 7.5ms & (23)    & 23.9ms & (31) \\ \hline
    \end{tabular}
    \label{tab:25nodes}
\end{table*}

\begin{table*}\small
    \centering
    \caption{Experiments with simulated graphs with $|V|=100$ vertices, with rows and columns identical to those in Table \ref{tab:25nodes}. The entries with an asterisk indicate that a sample of graphs was considered (instead of a full set of 1000) due to the long run-time. The brute-force algorithm was not considered as it was not feasible to compute the count for even a single graph.}
    \begin{tabular}{
        |l|
        r@{\hspace{1pt}}r|
        r@{\hspace{1pt}}r|
        r@{\hspace{1pt}}r|
        r@{\hspace{1pt}}r|
        r@{\hspace{1pt}}r|
        } \hline
        module & \multicolumn{2}{c|}{random}         & \multicolumn{2}{c|}{random mpv}       & \multicolumn{2}{c|}{min.~bound}    & \multicolumn{2}{c|}{max.~degree}   & \multicolumn{2}{c|}{bottleneck} \\ \hline \hline
        \drawcircle{white}
        & *{3,102}s & ({119,745,876})   & 5.21s & (52,954)      & 114s & (25,416)   & 9.93s & (101,342) & 122s & (526,925)  \\ \hline
        \drawcircle{myred}
        & *{241}s & ({1,727,306})       & 8.98s & (33,271)      & 3.93s & (2,802)   & 1.10s & (3,066)   & 4.28s & (12,597)  \\ \hline
        \drawcircle{myred}\;\drawcircle{myyellow}
        & *{132}s & ({387,910})        & 7.35s & (14,913)      & 3.67s & (1,111)   & 0.92s & (1,107)   & 3.08s & (2,052)   \\ \hline
        \drawcircle{myred}\;\drawcircle{myyellow}\;\drawcircle{myblue}
        & 165s & (508,521)         & 4.68s & (9,721)       & 3.22s & (1,103)   & 0.84s & (1,079)   & 2.79s & (2,133)   \\ \hline
    \end{tabular}
    \label{tab:100nodes}
\end{table*}

\subsection{Consistent Subgraphs in Biomedical Ontologies}
\label{sec:gohpocounts}

We use 02/2017 versions of Gene Ontology (GO) and Human Phenotype Ontology (HPO) as the target ontologies and compute the number of consistent subgraphs in each of them. The algorithm is applied to each of the three domains of GO \cite{Ashburner2000}: (i) molecular function ontology~(MFO; 10,789 terms) (ii) biological process ontology~(BPO; 29,575 terms) and (iii) cellular component ontology~(CCO; 4,085 terms). Together with HPO (12,167 terms), these four ontologies are widely used in annotating functional terms of gene products \cite{Radivojac2013, Jiang2016}. We further define the annotation \emph{level} for each term in the ontology to be the length of the longest path to the root. Starting from the root term, we add more specific terms level-by-level so as to understand how the potential annotation space grows with increased granularity of functional concepts.

In addition to level-wise full ontologies, we also investigate the ``used'' ontologies in which each term was retained only if at least one protein in the UniProt-GOA \cite{Huntley2015} and HPO \cite{Robinson2010} databases has been confidently assigned that term (confident annotations include all experimental evidence codes as well as ``traceable author statement'' and ``inferred by curators''). Protein function annotations were extracted from the 02/13/2017 release of the UniProt-GOA database, which contains 64,362 proteins with confident MFO annotations, 84,413 proteins with BPO annotations and 79,630 proteins with CCO annotations. HPO annotations were extracted from the 02/24/2017 release of the HPO database where 6,411 genes with confident annotations were extracted.

Figure~\ref{fig:results} shows the completed counts for both full and used level-wise ontologies. For each ontology, we additionally compute the lower bound (generally the larger of $2^{\ell}$ and $2^{r}$, where $r$ is the number of roots) and estimate the upper bound (we convert a graph into a forest by keeping only one randomly selected incoming edge for each multi-parent vertex and then call $\cforest$).

\begin{figure}
    \centering
    \includegraphics[width=.48\textwidth]{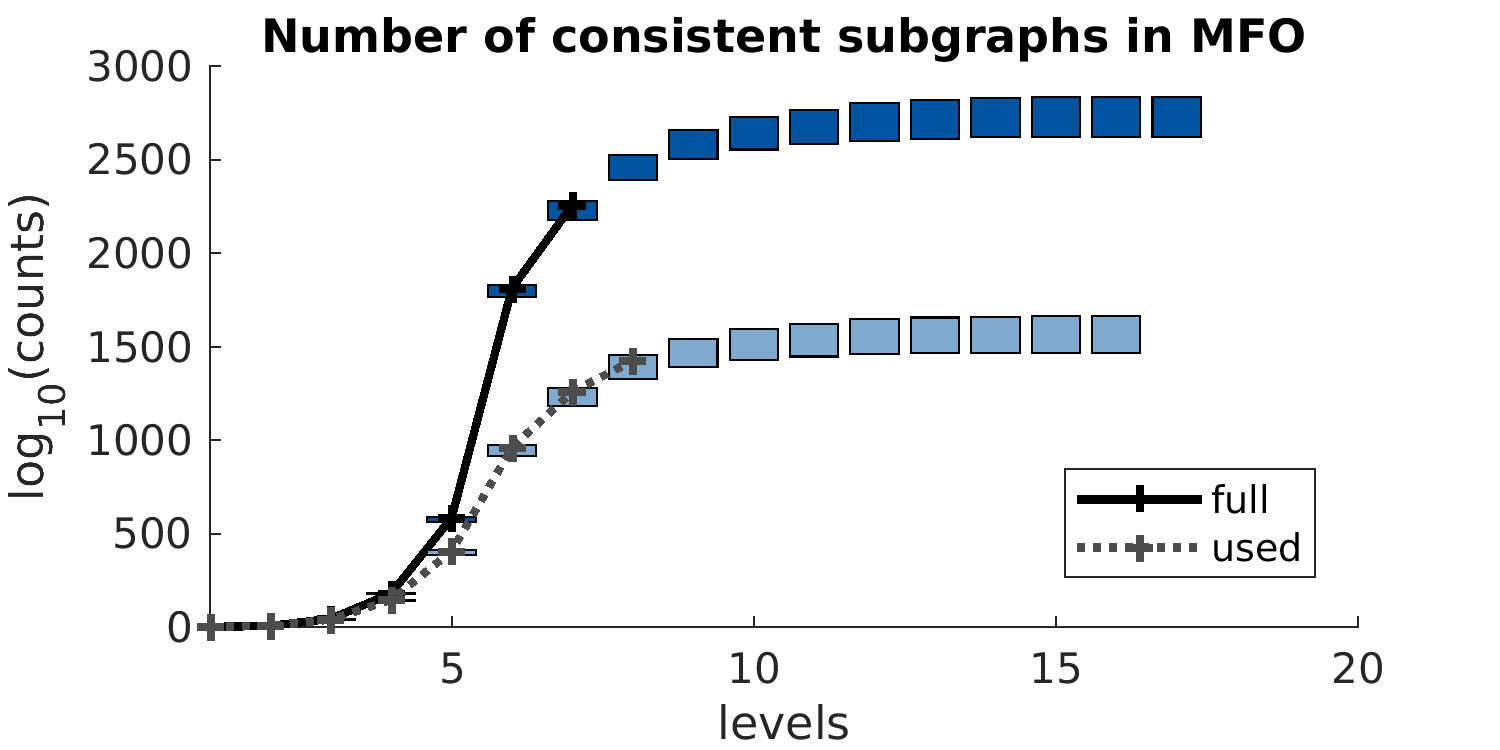}
    \includegraphics[width=.48\textwidth]{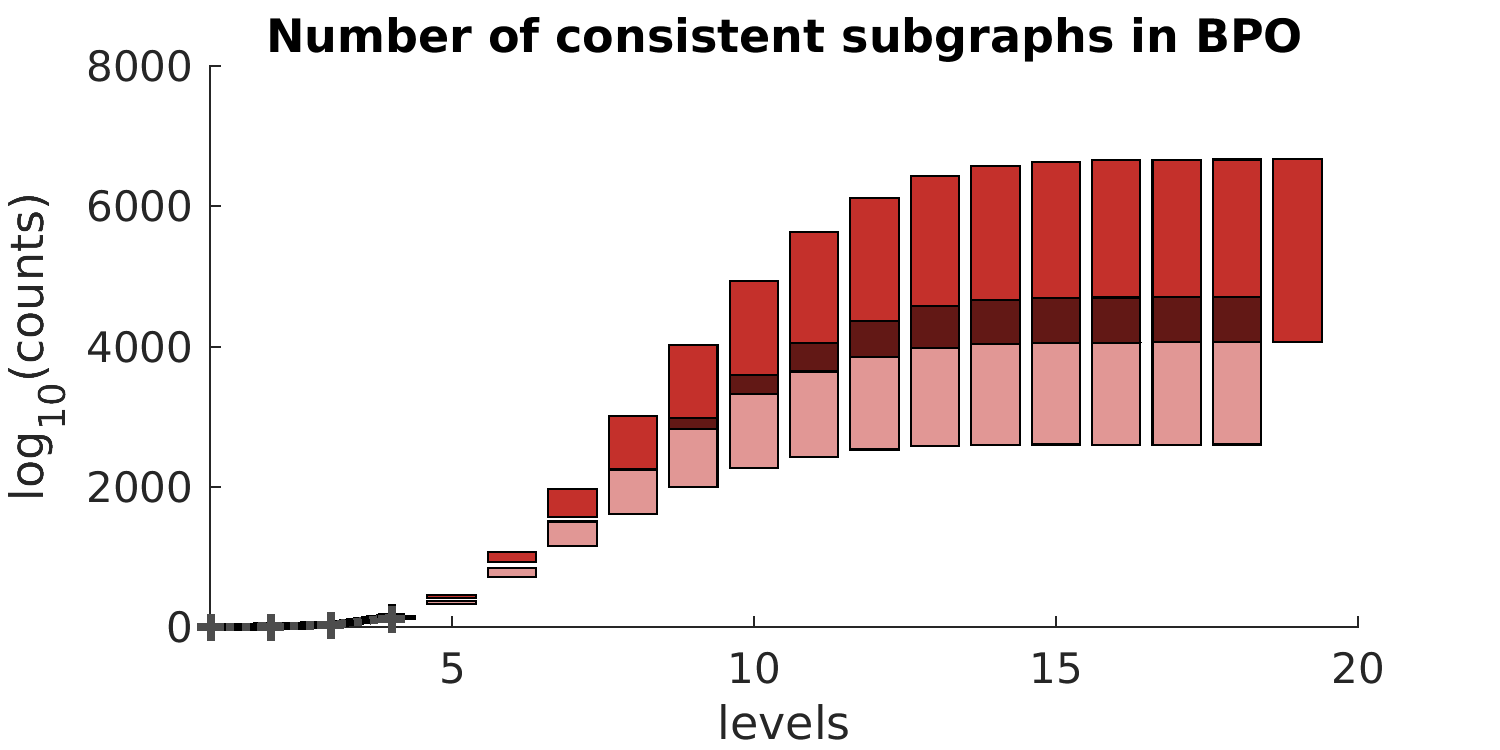} \\
    \includegraphics[width=.48\textwidth]{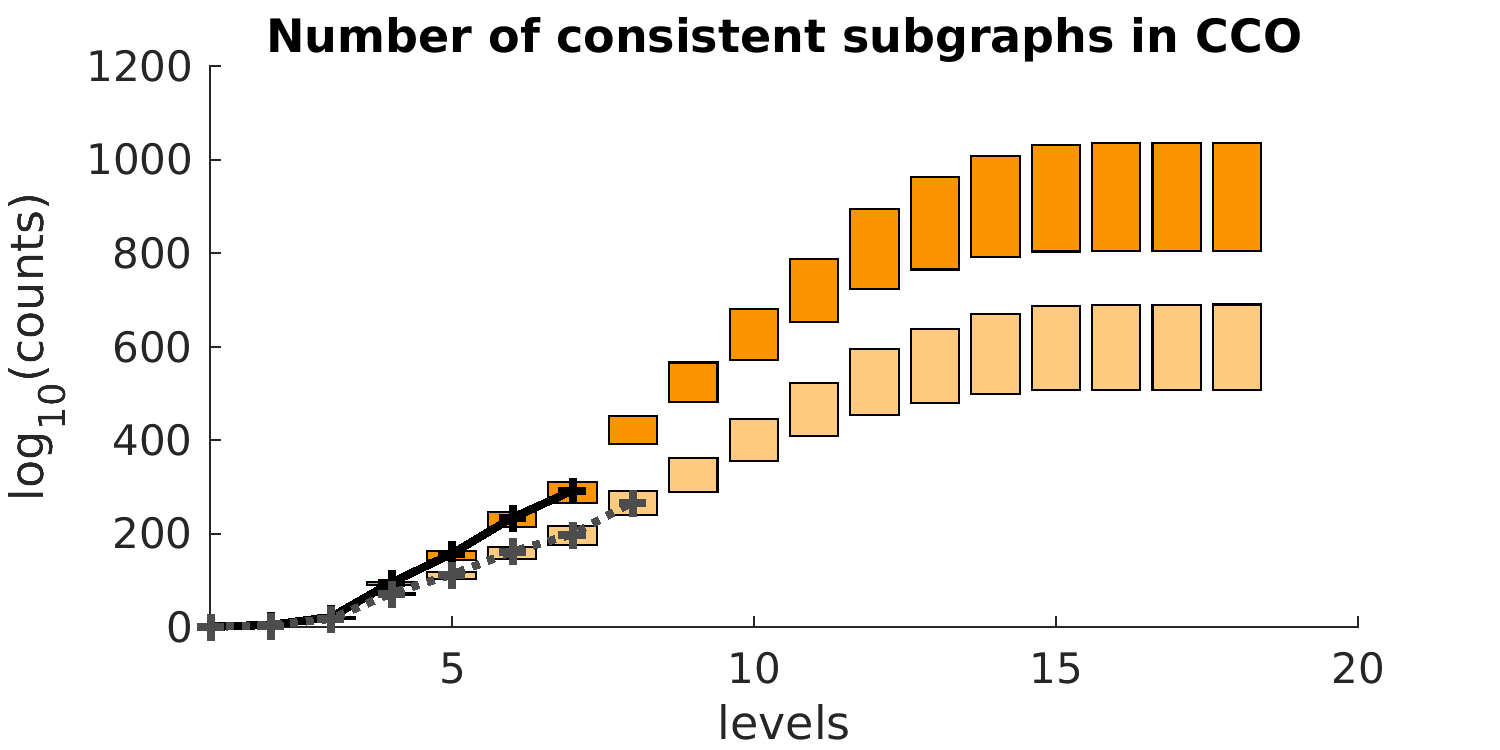}
    \includegraphics[width=.48\textwidth]{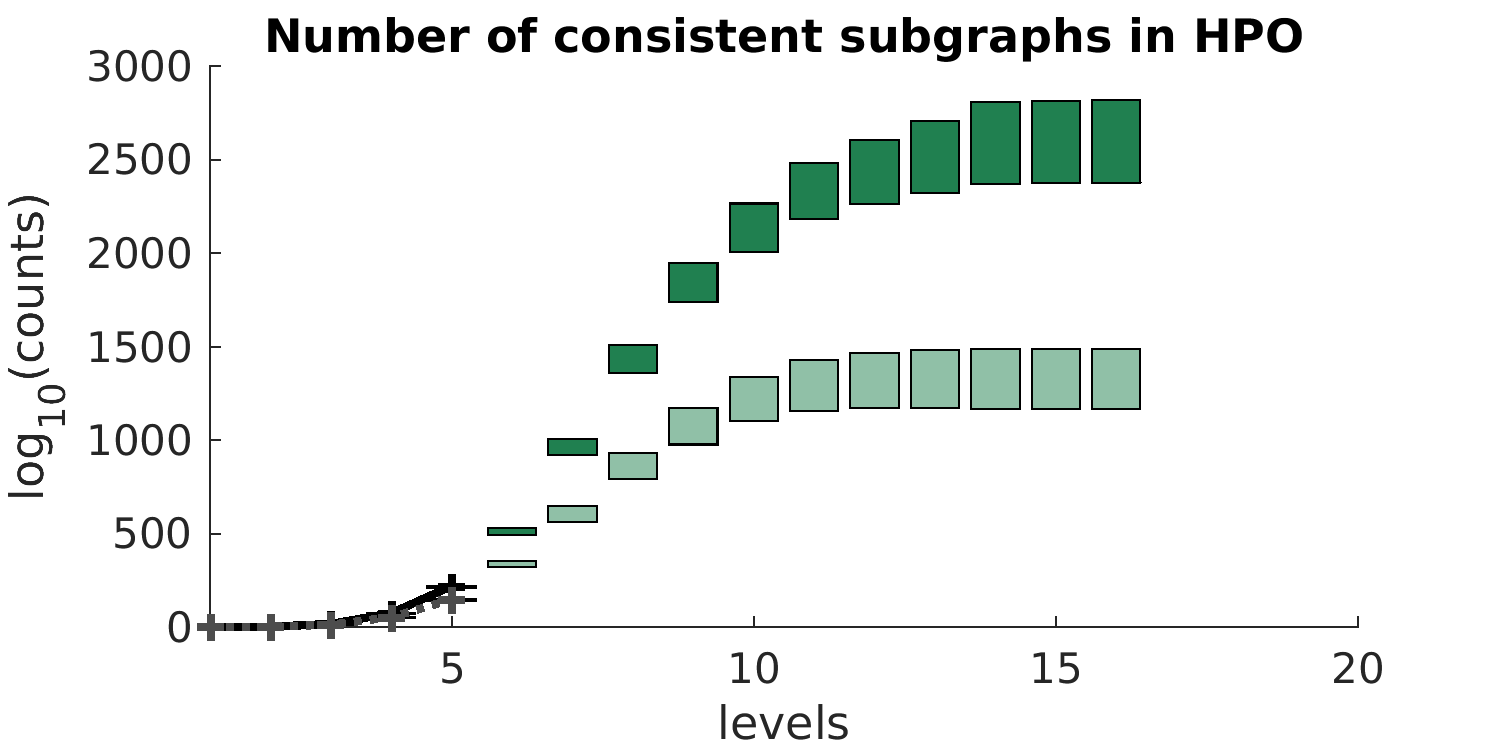}
    \caption{Number of consistent subgraphs in level-wise GO and HPO. In each panel, solid black lines mark the exact counts for ``full'' subgraphs and grey dotted lines mark the exact counts for ``used'' subgraphs. Colored bars indicate the estimated upper/lower bounds of the actual counts. The exact integer counts are available upon request.}
    \label{fig:results}
\end{figure}

Although we were not surprised by the astronomical sizes of concept annotation spaces, it was interesting to quantify them whenever feasible as well as to observe an increasing difference between lower and upper bounds (in the 1000s of orders of magnitude) with the level of the ontology. We also find it interesting that a large number of ontological terms have never been used (see Supplementary Materials). These outcomes raise questions regarding the predictability of ontological annotations as most modern algorithms are asked to provide accurate deep annotations to be deemed useful. However, annotation spaces become exceedingly large almost instantaneously with the depth of the ontology, which presents an immense computational and statistical challenge for any prediction algorithm. We therefore believe that the balance between ontology size/complexity and term granularity might become an important topic for future discussions.

\subsection{Entropy of Concept Annotation Spaces}
\label{sec:entropy}
The ability to enumerate subgraphs in relatively large ontologies presents an opportunity to contrast the space of actual ontological annotations in biological databases with the space of possible ontological annotations. To investigate this, we first computed the entropy of actual annotations at different levels in the ontology,

\begin{align*}
    H(\mathcal{O}_{lvl}) = - \sum_{i} P(\subg{\mathcal{O}_{lvl}}{S_i}) \log_2 P(\subg{\mathcal{O}_{lvl}}{S_i}),
\end{align*}

\noindent where $\mathcal{O}_{lvl}$ is the truncated ontology as in Section \ref{sec:gohpocounts}, $\subg{\mathcal{O}_{lvl}}{S_i}$ corresponds to a distinct consistent subgraph annotation observed at that level and $P(\subg{\mathcal{O}_{lvl}}{S_i})$ is the probability that a protein is assigned annotation $\subg{\mathcal{O}_{lvl}}{S_i}$. We first enumerated all observed subgraphs from the UniProt-GOA or HPO database truncated to a particular level, calculated their relative frequencies, and then plugged these relative frequencies into the entropy formula above. On the other hand, the maximum entropy was computed as $\log _{2} \cdag(\mathcal{O}_{lvl})$ by assuming equal probability for every possible consistent subgraphs.

\vspace{2mm}

\begin{figure}[!ht]
    \centering
    \noindent
    \begin{minipage}[t]{.62\textwidth}
        \centering
        \vspace{0pt}
        \includegraphics[width=\linewidth]{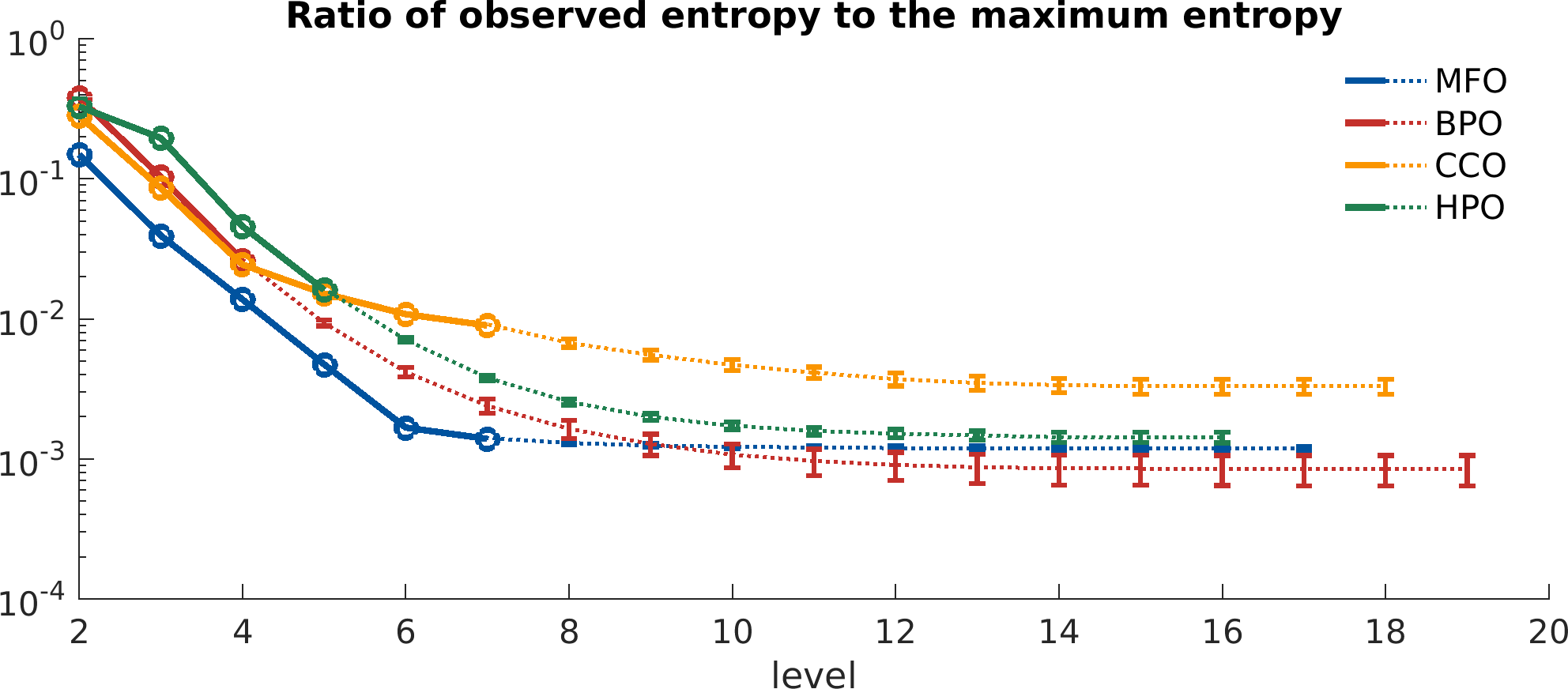}
    \end{minipage}%
    \hfill
    \begin{minipage}[t]{.36\textwidth}
        \vspace{0pt}
        \captionof{figure}{Ratio of entropies in the four ontologies. Circles with solid lines show the ratio of observed entropy to the maximum entropy. Dotted lines correspond to the estimated ratios as the average of the two ratios calculated by lower/upper bound of the counts. The error bars suggest a possible placement for the actual ratio.}
        \label{fig:entropy}
    \end{minipage}%
\end{figure}

Figure~\ref{fig:entropy} shows the ratio between the two quantities for levels greater than 1, suggesting that the world of protein functions, despite great diversity, has low entropy relative to the possible maximum. Although the currently observed functional annotations are incomplete, noisy and biased \cite{Schnoes2009, Schnoes2013, Jiang2014}, this suggests considerable departure from the uniform distribution and implies that an extensive number of possible consistent subgraphs have not been used.

\section{Related Work}
There exists a body of literature in enumerative combinatorics related to our work. One of the most relevant problems is the enumeration of directed acyclic graphs with $n$ distinct (labeled) nodes \cite{Robinson1971}. The resulting count reflects the size of the structure space of Bayesian networks with $n$ random variables and, surprisingly, also corresponds to the number of matrices in $\left\{ 0,1\right\} ^{n\times n}$ with all eigenvalues real and positive \cite{McKay2004}. The number of labeled directed acyclic graphs with $n$ nodes does not have a closed-form solution and is instead available as the A003024 sequence in the On-Line Encyclopedia of Integer Sequences (OEIS); \url{https://oeis.org/A003024}. The construction was originally proposed by Robinson \cite{Robinson1971} and was further investigated by others \cite{Stanley1973, Rodionov1992, Gessel1996}.

The results on rooted labeled trees include both the enumeration of possible number of trees and also the enumeration of subtrees for a given tree. There are $n^{n-1}$ labeled rooted trees with $n$ nodes \cite{Gross2004} that provide the integer sequence A000169 in OEIS; \url{https://oeis.org/A000169}. The expansion to forests gives $(n+1)^{n-1}$ using Cayley's formula \cite{Cayley1889}, as a single root can be added to connect a forest of unrooted labeled trees into a rooted labeled tree. The recurrence for the number of subtrees of a given tree was proposed by Ruskey \cite{Ruskey1981}; see Algorithm \ref{algo:tree}. The generalization to weighted subtrees was given by Yan and Yeh \cite{Yan2006}. Both algorithms are linear in $n$ assuming constant time addition and multiplication.

Our work also relates to the research in ontology quality assurance. These efforts typically include the analysis of irregularities and redundancy in concept descriptors and graph structure \cite{Bodenreider2003, Verspoor2009, Xing2016}. Our work, primarily the software we developed, contributes to this area by facilitating the analysis of the annotation space.

\section{Conclusions}
This work presents a practical algorithm for enumerating consistent subgraphs of directed acyclic graphs. Although this study largely addresses an intellectual challenge of efficient subgraph enumeration, it might have practical utility for the studies of annotation spaces in the biomedical sciences and beyond; e.g., in text mining \cite{Grosshans2014} and computer vision \cite{Movshovitz2015}. As ontologies are easy to grow and hard to manually interrogate, we provide a practical tool that can give insights into the complexity of concept annotation tasks \cite{Radivojac2013, Jiang2016}. As such, it may serve as a guide to ontology developers.

\section*{Acknowledgements}
This work has been supported by the National Science Foundation grant DBI-1458477 and the Indiana University Precision Health Initiative. 

\bibliographystyle{abbrv}
\bibliography{refs}

\begin{thebibliography}{10}

\bibitem{Ashburner2000}
M.~Ashburner, C.~A. Ball, et~al.
\newblock Gene ontology: tool for the unification of biology. {The Gene
  Ontology Consortium}.
\newblock {\em Nat Genet}, 25(1):25--29, 2000.

\bibitem{Bodenreider2003}
O.~Bodenreider.
\newblock Strength in numbers: exploring redundancy in hierarchical relations
  across biomedical terminologies.
\newblock {\em AMIA Annu Symp Proc}, pages 101--105, 2003.

\bibitem{Cayley1889}
A.~Cayley.
\newblock A theorem on trees.
\newblock {\em Quart J Math}, 23:376--378, 1889.

\bibitem{Clark2013}
W.~T. Clark and P.~Radivojac.
\newblock Information-theoretic evaluation of predicted ontological
  annotations.
\newblock {\em Bioinformatics}, 29(13):i53--i61, 2013.

\bibitem{Friedberg2017}
I.~Friedberg and P.~Radivojac.
\newblock Community-wide evaluation of computational function prediction.
\newblock {\em Methods Mol Biol}, 1446:133--146, 2017.

\bibitem{Gessel1996}
I.~M. Gessel.
\newblock Counting acyclic digraphs by sources and sinks.
\newblock {\em Discrete Math}, 160:253--258, 1996.

\bibitem{Gross2004}
J.~L. Gross and J.~Yellen.
\newblock {\em Handbook of graph theory}.
\newblock CRC Press, Boca Raton, Florida, U.S.A., 2004.

\bibitem{Grosshans2014}
M.~Grosshans, C.~Sawade, et~al.
\newblock Joint prediction of topics in a {URL} hierarchy.
\newblock In {\em Proceedings of the Joint European Conference on Machine
  Learning and Knowledge Discovery in Databases}, ECML/PKDD 2014, pages
  514--529, 2014.

\bibitem{Huntley2015}
R.~P. Huntley, T.~Sawford, et~al.
\newblock The {GOA} database: {Gene Ontology} annotation updates for 2015.
\newblock {\em Nucleic Acids Res}, 43(Database issue):D1057--D1063, 2015.

\bibitem{Jiang2014}
Y.~Jiang, W.~T. Clark, et~al.
\newblock The impact of incomplete knowledge on the evaluation of protein
  function prediction: a structured-output learning perspective.
\newblock {\em Bioinformatics}, 30(17):i609--i616, 2014.

\bibitem{Jiang2016}
Y.~Jiang, T.~R. Oron, et~al.
\newblock An expanded evaluation of protein function prediction methods shows
  an improvement in accuracy.
\newblock {\em Genome Biol}, 17(1):184, 2016.

\bibitem{Joachims2009}
T.~Joachims, T.~Finley, and C.-N.~J. Yu.
\newblock Cutting-plane training of structural {SVMs}.
\newblock {\em Mach Learn}, 77(1):27--59, 2009.

\bibitem{McKay2004}
B.~D. McKay, F.~E. Oggier, et~al.
\newblock Acyclic digraphs and eigenvalues of (0,1)-matrices.
\newblock {\em J Integer Seq}, 7:04.3.3, 2004.

\bibitem{Moreau2012}
Y.~Moreau and L.~C. Tranchevent.
\newblock Computational tools for prioritizing candidate genes: boosting
  disease gene discovery.
\newblock {\em Nat Rev Genet}, 13(8):523--536, 2012.

\bibitem{Movshovitz2015}
Y.~Movshovitz-Attias, Q.~Yu, et~al.
\newblock Ontological supervision for fine grained classification of street
  view storefronts.
\newblock In {\em Proceedings of the IEEE Conference on Computer Vision and
  Pattern Recognition}, CVPR 2015, pages 1693--1702, 2015.

\bibitem{Poux2017}
S.~Poux and P.~Gaudet.
\newblock Best practices in manual annotation with the {G}ene {O}ntology.
\newblock {\em Methods Mol Biol}, 1446:41--54, 2017.

\bibitem{Radivojac2013}
P.~Radivojac, W.~T. Clark, et~al.
\newblock A large-scale evaluation of computational protein function
  prediction.
\newblock {\em Nat Methods}, 10(3):221--227, 2013.

\bibitem{Robinson2011}
P.~N. Robinson and S.~Bauer.
\newblock {\em Introduction to bio-ontologies}.
\newblock CRC Press, Boca Raton, Florida, U.S.A., 2011.

\bibitem{Robinson2010}
P.~N. Robinson and S.~Mundlos.
\newblock The human phenotype ontology.
\newblock {\em Clin Genet}, 77(6):525--534, 2010.

\bibitem{Robinson1971}
R.~W. Robinson.
\newblock Counting labeled acyclic digraphs.
\newblock In {\em Proceedings of the 3rd Ann Arbor Conference on Graph Theory},
  pages 239--273. Academic Press, 1971.

\bibitem{Rodionov1992}
V.~I. Rodionov.
\newblock On the number of labeled acyclic digraphs.
\newblock {\em Discrete Math}, 105:319--321, 1992.

\bibitem{Ruskey1981}
F.~Ruskey.
\newblock Listing and counting subtrees of a tree.
\newblock {\em SIAM J Comput}, 10(1):141--151, 1981.

\bibitem{Schnoes2009}
A.~M. Schnoes, S.~D. Brown, et~al.
\newblock Annotation error in public databases: misannotation of molecular
  function in enzyme superfamilies.
\newblock {\em PLoS Comput Biol}, 5(12):e1000605, 2009.

\bibitem{Schnoes2013}
A.~M. Schnoes, D.~C. Ream, et~al.
\newblock Biases in the experimental annotations of protein function and their
  effect on our understanding of protein function space.
\newblock {\em PLoS Comput Biol}, 9(5):e1003063, 2013.

\bibitem{Sokolov2010}
A.~Sokolov and A.~Ben-Hur.
\newblock Hierarchical classification of gene ontology terms using the
  {GOstruct} method.
\newblock {\em J Bioinform Comput Biol}, 8(2):357--376, 2010.

\bibitem{Stanley1973}
R.~P. Stanley.
\newblock Acyclic orientations of graphs.
\newblock {\em Discrete Math}, 5:171--178, 1973.

\bibitem{Verspoor2009}
K.~Verspoor, D.~Dvorkin, et~al.
\newblock Ontology quality assurance through analysis of term transformations.
\newblock {\em Bioinformatics}, 25(12):i77--i84, 2009.

\bibitem{Vihinen2014}
M.~Vihinen.
\newblock Variation {O}ntology for annotation of variation effects and
  mechanisms.
\newblock {\em Genome Res}, 24(2):356--364, 2014.

\bibitem{Xing2016}
G.~Xing, G.~Q. Zhang, and L.~Cui.
\newblock {FEDRR}: fast, exhaustive detection of redundant hierarchical
  relations for quality improvement of large biomedical ontologies.
\newblock {\em BioData Min}, 9:31, 2016.

\bibitem{Yan2006}
W.~Yan and Y.~N. Yeh.
\newblock Enumeration of subtrees of trees.
\newblock {\em Theor Comput Sci}, 369(1-3):256--268, 2006.

\end{thebibliography}

\end{document}